\newcommand{\orcid}[1]{\href{https://orcid.org/#1}{\includegraphics[width=8pt]{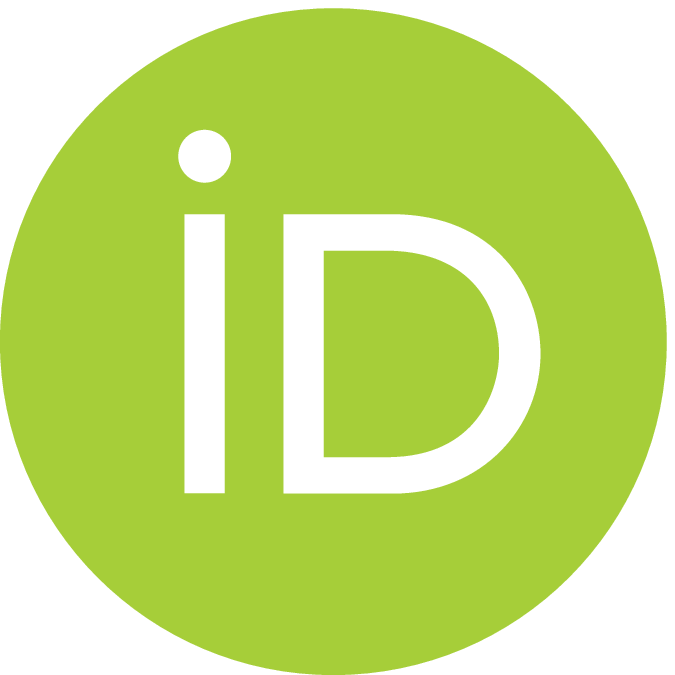}}}
\newtheorem{rem}{Remark.}
\newtheorem*{definition*}{Definition.}
\newtheorem{fact}{Fact.}
\renewenvironment{abstract}{
  \small
  \list{}{
    \setlength{\leftmargin}{.5cm}%
    \setlength{\rightmargin}{\leftmargin}%
  }%
  \item\relax
}
\begin{document}

\mainmatter  

\title{Efficient PTAS for the Maximum Traveling Salesman Problem in a Metric Space of Fixed Doubling Dimension}

\titlerunning{Efficient PTAS for Max TSP in a Doubling Space}

%
%
\author{Vladimir Shenmaier \orcid{0000-0002-4692-1994}}
%
\authorrunning{V.\,V. Shenmaier}

\institute{Sobolev Institute of Mathematics, Novosibirsk, Russia\\
\mailsa}

%
%

\tocauthor{Vladimir Shenmaier}
\maketitle

\begin{abstract}{\bf Abstract.}
The maximum traveling salesman problem (Max~TSP) consists of finding a Hamiltonian cycle with the maximum total weight of the edges in a given complete weighted graph.
This problem is APX-hard in the general metric case but admits polynomial-time approximation schemes in the geometric setting, when the edge weights are induced by a vector norm in fixed-dimensional real space.
We propose the first approximation scheme for Max~TSP in an arbitrary metric space of fixed doubling dimension.
The proposed algorithm implements an efficient PTAS which, for any fixed $\varepsilon\in(0,1)$, computes a $(1-\varepsilon)$-approximate solution of the problem in cubic time.
Additionally, we suggest a cubic-time algorithm which finds asymptotically optimal solutions of the metric Max~TSP in fixed and sublogarithmic doubling dimensions.
\begin{keywords}
Max TSP $\cdot$ Metric space $\cdot$ Doubling dimension $\cdot$ Approximation scheme $\cdot$ EPTAS $\cdot$ Asymptotically exact algorithm
\end{keywords}
\end{abstract}

\section{Introduction}
The maximum traveling salesman problem can be formulated as follows:\medskip

\noindent\textbf{Max~TSP.}
Given an $n$-vertex complete weighted (directed or undirected) graph $G$ with non-negative edge weights, find a Hamiltonian cycle in $G$ with the maximum total weight of the edges.\medskip

Max~TSP is the maximization version of the classic traveling salesman problem (TSP) and, like TSP, is among the most intensively researched NP-hard problems in computer science.
In this paper, we consider the metric Max~TSP, i.e., the special case in which the edge weights satisfy the triangle inequality and the symmetry axiom.\medskip

\noindent\textbf{Related work.}
Max~TSP has been actively studied since the 1970s.
The approximation factors of currently best polynomial-time algorithms in different cases are: $2/3$ for arbitrary asymmetric weights \cite{Kaplan}; $7/9$ for arbitrary symmetric weights \cite{Paluch79}; $35/44$ for the asymmetric metric case \cite{Kowalik3544}; and $7/8$ for the metric case~\cite{Kowalik78}.

On the complexity side, Max~TSP is APX-hard even in a metric space with distances $1$ and $2$: It follows from the corresponding result for TSP \cite{Papadimitriou1993,EK}.
The problem remains NP-hard in the geometric setting when the vertices of the input graph are some points in space $\mathbb R^3$ and the distances between them are induced by Euclidean norm~\cite{Fekete}.
The proof of this fact implies that the Euclidean Max~TSP does not admit a fully polynomial-time approximation scheme (FPTAS) unless P$=$NP. 

However, there exists a polynomial-time algorithm which computes asymptotically optimal solutions of the Euclidean problem in any fixed dimension~\cite{Serdyukov1987}.
The relative error of this algorithm is estimated as $c_d/n^{\frac{2}{d+1}}$, where $d$ is the dimension of space and $c_d$ is some constant depending on $d$.
In~\cite{Shen2010,Shen2014}, this result is extended to the case when the edge weights are induced by any (unknown) vector norm.
It follows that Max~TSP in a fixed-dimensional normed space admits an efficient polynomial-time approximation scheme (EPTAS).
Ano\-ther approach to constructing close-to-optimal solutions of the geometric Max~TSP is based on the algorithmic properties of this problem in a polyhedral space \cite{Serdyukov1997,Barvinok}.
This approach leads to a scheme EPTAS for the case of a ``fixed norm'', when it is possible to approximate the distances between vertices by a polyhedral metric.

Note that the usual traveling salesman problem is not in APX \cite{Sahni}.
The Euclidean TSP is NP-hard already in $\mathbb R^2$ \cite{Papadimitriou1977} but admits approximation schemes PTAS for each fixed dimension \cite{Arora,Mitchell}.
Moreover, as Bartal, Gottlieb, and Krauthgamer show, TSP admits a scheme PTAS with running time $O\big(n^{2^{O(dim)}}\cdot 2^{(2^{dim}/\varepsilon)^{O(dim)}\sqrt{\log n}}\big)$ in any metric space of fixed doubling dimension $dim$~\cite{Bartal}.\medskip

\noindent\textbf{Our contributions.}
Surprisingly, the existence of a polynomial-time approximation scheme for Max~TSP in fixed doubling dimensions was still an open question.
The \emph{doubling dimension} of a metric space is the smallest value $dim\ge 0$ such that every ball in this space can be covered by $2^{dim}$ balls of half the radius.
A doubling space, i.e., a metric space of bounded doubling dimension, seems to be a natural and useful generalization of a fixed-dimensional normed space since, unlike the metrics induced by vector norms, a doubling metric may be not translation invariant and not homogeneous, which is relevant to real-life distance functions.

We show that, for any $\varepsilon\in(0,1)$, a $(1-\varepsilon)$-approximate solution of the maximum traveling salesman problem in an arbitrary metric space of doubling dimension $dim$ can be computed in time $O\big(2^{(2/\varepsilon)^{2dim+1}}+n^3\big)$.
Thus, in the case of fixed doubling dimension, we have a scheme \mbox{EPTAS}, which is the first polynomial-time approximation scheme for Max~TSP in a doubling space.
Additionally, we propose an $O(n^3)$-time approximation algorithm which computes asymptotically optimal solutions of the problem in fixed and sublogarithmic doubling dimensions, i.e., when $dim=o(\log n)$.
The relative error of this algorithm is estimated as $(11/6)/n^{\frac{1}{2dim+1}}$.

Our technique is pretty simple and is based on combining cycles in cycle covers of the input graph.
The key statement we use is the observation that, in the case of a low doubling dimension, the number of cycles in any cycle cover can be reduced to a small value with a small relative loss of the weight (Lemma~\ref{lem2}).
It allows to get a Hamiltonian cycle whose total weight is close to that of an optimal cycle cover.

\section{Basic definitions and properties}
%
A \emph{metric space} is an arbitrary set ${\cal M}$ with a non-negative distance function $dist$ which is defined for each pair $x,y\in{\cal M}$ and satisfies the triangle inequality and the symmetry axiom.
Given a metric space $({\cal M},dist)$, a \emph{ball} of radius $r$ in this space centered at a point $x\in{\cal M}$ is the set $B(x,r)=\{y\in{\cal M}\,|\,dist(x,y)\le r\}$.
The \emph{doubling dimension} of a metric space is the smallest value $dim\ge 0$ such that every ball in this space can be covered by $2^{dim}$ balls of half the radius.

\begin{rem}\label{rem1}
It is easy to see that, if a metric space is of doubling dimension at most $dim$, then each $r$-radius ball in this space can be covered by $(2/\delta)^{dim}$ balls of radius $\delta r$, where $\delta$ is any value from $(0,1)$.
Indeed, by induction, an $r$-radius ball can be covered by $2^{i\cdot dim}$ balls of radius $r/2^i$, $i=1,2,\dots$.
Hence, by selecting the integer $i$ for which $1/2^i\le\delta<1/2^{i-1}$, we obtain $2^{i\cdot dim}<(2/\delta)^{dim}$ covering balls of radius $r/2^i\le\delta r$.
\end{rem}


Suppose that we are given a set $V$ of $n$ points in ${\cal M}$ and also all the pairwise distances $dist(a,b)$, $a,b\in V$.
Denote by $G[V]$ the complete weighted undirected graph on the vertex set $V$ in which the weight of every edge $\{a,b\}$ is defined as $dist(a,b)$.
The metric Max~TSP asks to find a maximum-weight Hamiltonian cycle in $G[V]$.

In short, the suggested algorithm can be described as follows.
We start with constructing a maximum-weight cycle cover of the graph $G[V]$, i.e., a maximum-weight spanning subgraph of this graph in which every connected component is a cycle.
Then, based on properties of doubling metrics, we significantly reduce the number of cycles in this cycle cover with a small weight loss.
Finally, the remaining cycles are combined into one, also with a slight weight loss, by using the known method from~\cite{Serdyukov1991}.

To reduce the number of cycles in the cycle cover, we repeatedly patch two of the cycles into one.
Below, we define the main notion we use in this procedure:

\begin{definition*}
Let $c_1$, $c_2$ be vertex-disjoint cycles in $G[V]$ and $\{a_i,b_i\}$ be any edge in $c_i$, $i=1,2$.
A \emph{$\delta$-gluing of the cycles $c_1,c_2$ on the edges $\{a_1,b_1\}$, $\{a_2,b_2\}$}, where $\delta\in(0,1)$, is a combining of these cycles into one by replacing the pair of edges $\{a_1,b_1\}$, $\{a_2,b_2\}$ by one of two pairs $\{a_1,b_2\}$, $\{a_2,b_1\}$ or\, $\{a_1,a_2\}$, $\{b_1,b_2\}$ such that the total weight of the replacing pair is at least $(1-\delta)\big(dist(a_1,b_1)+dist(a_2,b_2)\big)$.
\end{definition*}

A $\delta$-gluing allows to patch the cycles $c_1,c_2$ into one so that we lose at most $\delta$ of the total weight of the edges $\{a_1,b_1\}$, $\{a_2,b_2\}$.

Let us consider arbitrary vertex-disjoint cycles $c_1,\dots,c_k$ forming a cycle cover of the graph $G[V]$, select any edge $\{a_i,b_i\}$ in each cycle $c_i$, $i=1,\dots,k$, and suppose that $\{a_\tau,b_\tau\}$ is a shortest edge among all $\{a_i,b_i\}$.
Define the value
$$R_\tau(\{a_i,b_i\}_{i=1}^k)=\max_{v\in S}dist(\{a_\tau,b_\tau\},v),$$
where $S=\{a_1,b_1,\dots,a_k,b_k\}$ and $dist(\{a,b\},v)=\min\{dist(a,v),dist(b,v)\}$.
The key statements underlying our algorithm are the following lemmas.

\begin{lemma}\label{lem1} 
If no pair of cycles $c_p,c_q$ admits a $\delta$-gluing on the edges $\{a_p,b_p\}$, $\{a_q,b_q\}$, $p,q\in\{1,\dots,k\}$, then $R_\tau(\{a_i,b_i\}_{i=1}^k)<t/\delta-t$, where $t=dist(a_\tau,b_\tau)$.
\end{lemma}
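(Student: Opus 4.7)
The plan is to fix an arbitrary $v\in S$ and bound $dist(\{a_\tau,b_\tau\},v)$ by applying the no-gluing hypothesis to the pair consisting of $c_\tau$ and the cycle containing $v$.

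If $v\in\{a_\tau,b_\tau\}$ the distance is zero and the claim is trivial (note $t/\delta-t>0$ since $\delta\in(0,1)$ and $t>0$). Otherwise $v$ is an endpoint of some selected edge $\{a_q,b_q\}$ with $q\neq\tau$, and I may write $v=a_q$, $u=b_q$. The failure of a $\delta$-gluing of $c_\tau$ and $c_q$ on $\{a_\tau,b_\tau\}$ and $\{v,u\}$ gives two strict inequalities, one for each alternative replacement pair:
\begin{align*}
  dist(a_\tau,v)+dist(b_\tau,u)&<(1-\delta)\bigl(t+dist(v,u)\bigr),\\
  dist(a_\tau,u)+dist(b_\tau,v)&<(1-\delta)\bigl(t+dist(v,u)\bigr).
\end{align*}

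Next I would add these two inequalities and regroup the left-hand side as $\bigl(dist(a_\tau,v)+dist(a_\tau,u)\bigr)+\bigl(dist(b_\tau,v)+dist(b_\tau,u)\bigr)$. Each parenthesized pair is at least $dist(v,u)$ by the triangle inequality, so $2\,dist(v,u)<2(1-\delta)\bigl(t+dist(v,u)\bigr)$, which rearranges to $dist(v,u)<t/\delta-t$.

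Finally, reinjecting this bound into either of the original inequalities, every non-negative summand on its left side is bounded by $(1-\delta)\bigl(t+dist(v,u)\bigr)<(1-\delta)\cdot t/\delta=t/\delta-t$. In particular $dist(a_\tau,v)$ and $dist(b_\tau,v)$ are both $<t/\delta-t$, so $dist(\{a_\tau,b_\tau\},v)<t/\delta-t$; taking the maximum over $v\in S$ yields the lemma. The only mildly non-obvious point is this ``feedback'' step: each no-gluing inequality by itself only controls distances in terms of the unknown $dist(v,u)$, so one must first add them to get a grip on $dist(v,u)$ and then substitute back to sharpen the individual bounds.
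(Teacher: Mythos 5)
Your proof is correct, and it takes a noticeably different route from the paper's, so a comparison is worthwhile. The paper fixes only the maximizing vertex $v$ and the nearer endpoint $u$ of $\{a_\tau,b_\tau\}$, invokes the failure of just \emph{one} of the two replacement pairs (the one joining $u'$ to $v$ and $u$ to $v'$), and then eliminates the unknown $Q=dist(v,v')$ via the convex-combination bound $\max\{Q,R\}\ge(1-\delta)Q+\delta R$. You instead exploit the failure of \emph{both} replacement pairs: summing the two strict inequalities and applying the triangle inequality through each of $a_\tau,b_\tau$ first pins down the partner-edge length, $dist(v,u)<t/\delta-t$, and substituting this back bounds each individual distance by $(1-\delta)\bigl(t+dist(v,u)\bigr)<t/\delta-t$. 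Your argument avoids both the choice of the nearer endpoint and the max/convex-combination trick, is symmetric in the two endpoints, and in fact proves slightly more than the lemma asks: every selected edge other than $\{a_\tau,b_\tau\}$ has weight below $t/\delta-t$, and \emph{both} endpoints of $\{a_\tau,b_\tau\}$ (not just the nearer one) are within $t/\delta-t$ of every vertex of $S$ outside that edge — which still suffices, with the triangle inequality, for the covering argument in Lemma~2. The paper's proof is marginally more economical in that it needs only one failed option per pair. One minor caveat, shared with the paper's own proof: your handling of the trivial case $v\in\{a_\tau,b_\tau\}$ tacitly assumes $t>0$, which is implicit throughout the paper as well.
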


\begin{proof}
By the construction, the value of $R=R_\tau(\{a_i,b_i\}_{i=1}^k)$ equals the distance between some vertex $u\in\{a_\tau,b_\tau\}$ and some vertex $v\in\{a_\ell,b_\ell\}$, where $\ell\in\{1,\dots,k\}$.
Let $u'$ and $v'$ be the other endpoints of the edges $\{a_\tau,b_\tau\}$ and $\{a_\ell,b_\ell\}$, i.e., those for which $\{a_\tau,b_\tau\}=\{u,u'\}$ and $\{a_\ell,b_\ell\}=\{v,v'\}$.
Then, by the triangle inequality, we have $dist(u,v)+dist(u,v')\ge dist(v,v')$ (see Fig.~\ref{fig:1}).
At the same time, the choice of $u$ implies that $dist(u,v)=dist(\{u,u'\},v)\le dist(u',v)$.
So
$$dist(u',v)+dist(u,v')\ge dist(u,v)+dist(u,v')\ge\max\{dist(v,v'),dist(u,v)\}.$$

\begin{figure}[!ht]
\centering
\captionsetup{justification=centering}
\includegraphics[scale=1]{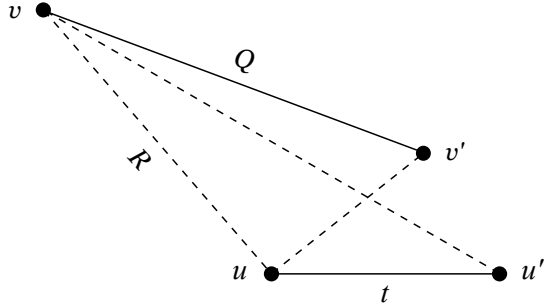}
\caption{Proof of Lemma~\ref{lem1}}
\label{fig:1}
\end{figure}

By the condition, the cycles $c_\tau$ and $c_\ell$ do not admit a $\delta$-gluing on the edges $\{a_\tau,b_\tau\}$, $\{a_\ell,b_\ell\}$.
Then, by the above, we have $\max\{Q,R\}<(Q+t)(1-\delta)$, where $Q=dist(v,v')$.
But the maximum of two values is at least any of their convex combinations, so $\max\{Q,R\}\ge Q(1-\delta)+R\delta$.
It follows that $Q(1-\delta)+R\delta<(Q+t)(1-\delta)$, which implies the inequality $R<t/\delta-t$.
The lemma is proved.
\hfill$\Box$
\end{proof}

\begin{lemma}\label{lem2} 
If the space $({\cal M},dist)$ is of doubling dimension at most $dim$ and no pair of cycles $c_p,c_q$ admits a $\delta$-gluing on the edges $\{a_p,b_p\}$, $\{a_q,b_q\}$, $p,q\in\{1,\dots,k\}$, then $k$ is at most $(2/\delta)^{2dim}/2$.
\end{lemma}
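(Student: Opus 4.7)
The plan is to identify the set $S=\{a_1,b_1,\ldots,a_k,b_k\}$ as a packing of $2k$ pairwise $\delta t$-separated points inside a single ball of radius $t/\delta$, and then bound its size via the doubling property recorded in Remark~\ref{rem1}.

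I would first localize $S$. Lemma~\ref{lem1} already gives $R_\tau<t/\delta-t$, so every $v\in S$ lies within $R_\tau$ of $a_\tau$ or of $b_\tau$; combining this with $dist(a_\tau,b_\tau)=t$ and the triangle inequality yields $v\in B(a_\tau,R_\tau+t)\subset B(a_\tau,t/\delta)$, placing all of $S$ inside a single ball of controlled radius.

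The core step is showing that the $2k$ points of $S$ are pairwise more than $\delta t$ apart. For two endpoints of the same chosen edge, minimality of $t$ immediately gives $dist(a_i,b_i)\ge t>\delta t$. For endpoints coming from two distinct edges $\{a_i,b_i\}$ and $\{a_\ell,b_\ell\}$, I would chain the triangle inequality to obtain
\[
dist(a_i,b_\ell)+dist(a_\ell,b_i)\ge dist(a_i,b_i)+dist(a_\ell,b_\ell)-2\min\{dist(a_i,a_\ell),dist(b_i,b_\ell)\},
\]
together with the symmetric lower bound for $dist(a_i,a_\ell)+dist(b_i,b_\ell)$ in terms of $\min\{dist(a_i,b_\ell),dist(a_\ell,b_i)\}$. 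Since $dist(a_i,b_i)+dist(a_\ell,b_\ell)\ge 2t$, if even one of the four cross-distances were at most $\delta t$, the corresponding replacement pair would have weight at least $(1-\delta)\bigl(dist(a_i,b_i)+dist(a_\ell,b_\ell)\bigr)$, producing a $\delta$-gluing in contradiction to the hypothesis. Hence all four cross-distances strictly exceed $\delta t$, and $S$ is $\delta t$-separated.

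For the last step I would apply Remark~\ref{rem1} with parameter $\delta^2/2\in(0,1)$: the ball $B(a_\tau,t/\delta)$ can be covered by $(2/(\delta^2/2))^{dim}=(2/\delta)^{2dim}$ balls of radius $(\delta^2/2)(t/\delta)=\delta t/2$. Any two points lying in a common such small ball would be at distance at most $\delta t$, forbidden by the separation just established, so each small ball catches at most one element of $S$. Therefore $2k\le(2/\delta)^{2dim}$, yielding the claimed inequality $k\le(2/\delta)^{2dim}/2$. The one delicate point I anticipate is carrying out the triangle-inequality chain for the four cross-distances in a uniform way; once that separation is in hand, the covering-number estimate is an immediate instantiation of Remark~\ref{rem1}.
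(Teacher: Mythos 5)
Your proof is correct and takes essentially the same route as the paper's: both confine $S$ to the ball $B(a_\tau,t/\delta)$ via Lemma~\ref{lem1}, cover that ball by $(2/\delta)^{2dim}$ balls of radius $\delta t/2$ via Remark~\ref{rem1}, and use the observation that two chosen endpoints at distance at most $\delta t$ (necessarily from distinct edges) would permit a $\delta$-gluing. The only difference is presentational: you phrase this as a $\delta t$-separation/packing bound on $S$, whereas the paper runs the identical argument as a pigeonhole contradiction.
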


\begin{proof}
By Lemma~\ref{lem1} and the triangle inequality, the set $S=\{a_1,b_1,\dots,a_k,b_k\}$ is contained in the ball $B(u,t/\delta)$, where $u\in\{a_\tau,b_\tau\}$.
On the other hand, the definition of doubling dimension implies that this ball can be covered by $(4/\delta^2)^{dim}=(2/\delta)^{2dim}$ balls of radius $t\delta/2$ (see Remark~\ref{rem1}).

Suppose that $k>(2/\delta)^{2dim}/2$.
Then $|S|=2k>(2/\delta)^{2dim}$, so there exists a pair of vertices $x,y\in S$, $x\ne y$, lying in one of the $(t\delta/2)$-radius balls which cover the ball $B(u,t/\delta)$.
But $t$ is the weight of a shortest edge among all $\{a_i,b_i\}$.
Hence, by the triangle inequality and since $\delta<1$, we have
$$dist(x,y)\le t\delta<t\le dist(a_i,b_i)$$
for all $i\in\{1,\dots,k\}$.
It follows that $x$ and $y$ are vertices of different edges $\{a_i,b_i\}$, say, $x\in\{a_p,b_p\}$, $y\in\{a_q,b_q\}$ for some $p,q\in\{1,\dots,k\}$, $p\ne q$.
Denote by $x'$ and $y'$ the other endpoints of $\{a_p,b_p\}$, $\{a_q,b_q\}$, i.e., $\{x,x'\}=\{a_p,b_p\}$ and $\{y,y'\}=\{a_q,b_q\}$.
Then, by the above and by the axioms of metric, we obtain the inequalities
$$dist(y,x')\ge dist(x,x')-t\delta\ge dist(x,x')(1-\delta),$$
$$dist(x,y')\ge dist(y,y')-t\delta\ge dist(y,y')(1-\delta),$$
which imply that the cycles $c_p,c_q$ admit a $\delta$-gluing on the edges $\{a_p,b_p\}$, $\{a_q,b_q\}$.
The lemma is proved.
\hfill$\Box$
\end{proof}

Next, we recall the following result of Serdyukov, which will be useful for us:

\begin{fact}{\rm\cite{Serdyukov1991,BGS}\,}\label{fact1} 
Let $G$ be an $n$-vertex complete weighted graph where the distances between vertices satisfy the triangle inequality and let $C$ be a cycle cover of $G$ which consists of $k$ cycles.
Then $C$ can be combined into a Hamiltonian cycle of total weight at least $(1-1/n)^{k-1}$ of that of $C$ by an $O(kn)$-time algorithm.
\end{fact}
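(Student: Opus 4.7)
The plan is to prove Fact~\ref{fact1} by repeated merging, showing that at each step two of the current cycles can be fused into one with at most a $1/n$ fraction of the total weight destroyed, so that $k-1$ merges leave a single Hamiltonian cycle of weight at least $(1-1/n)^{k-1}$ of the original cycle cover.

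The technical core is a one-step merging estimate: given any two vertex-disjoint cycles $c_1,c_2$ of a metric graph and any edge $\{u,u'\}\in c_2$, the cycles can be combined into a single cycle on $V(c_1)\cup V(c_2)$ with weight loss at most $d(u,u')$. To prove this, fix an arbitrary edge $\{v,v'\}\in c_1$ and consider the two merges obtained by removing $\{v,v'\}$ and $\{u,u'\}$ and adding either $\{v,u\},\{v',u'\}$ or $\{v,u'\},\{v',u\}$. The four added distances can be regrouped as $\bigl(d(v,u)+d(v',u)\bigr)+\bigl(d(v,u')+d(v',u')\bigr)\ge 2\,d(v,v')$ by the triangle inequality and, equally, as $\bigl(d(v,u)+d(v,u')\bigr)+\bigl(d(v',u)+d(v',u')\bigr)\ge 2\,d(u,u')$. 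Hence the better of the two pairings adds weight at least $\max\bigl(d(v,v'),d(u,u')\bigr)$, so it loses at most $\min\bigl(d(v,v'),d(u,u')\bigr)\le d(u,u')$.

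To apply this iteratively, at each step I would select the cycle $c^*$ minimizing the ratio $W(c^*)/|V(c^*)|$ across the current cover. Because $\sum_i W(c_i)=W$ and $\sum_i |V(c_i)|=n$, the minimum ratio satisfies $W(c^*)/|V(c^*)|\le W/n$, where $W$ is the current total weight and $n$ the (unchanged) vertex count. Since a cycle on $|V(c^*)|$ vertices has $|V(c^*)|$ edges, the shortest edge of $c^*$ has weight at most $W(c^*)/|V(c^*)|\le W/n$, and the one-step estimate applied to $c^*$ and any other cycle along this shortest edge produces a merge with loss at most $W/n$. The new cover therefore has weight at least $W(1-1/n)$ and one fewer cycle; iterating $k-1$ times yields a Hamiltonian cycle of weight at least $(1-1/n)^{k-1}W(C)$.

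The main obstacle is tuning the triangle-inequality manipulation so that the loss is controlled by $d(u,u')$ alone rather than by a symmetric average such as $\bigl(d(v,v')+d(u,u')\bigr)/2$; the decisive ingredient is that both pairings are simultaneously available, which is precisely what unlocks both of the regroupings above. On the complexity side, each iteration reduces to locating $c^*$, finding its shortest edge, and selecting the better of the two pairings, all of which can be done in $O(n)$ time, giving the stated $O(kn)$ total.
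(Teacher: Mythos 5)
Your proof is correct: the paper states Fact~1 without proof, citing Serdyukov (1991) and Barvinok--Gimadi--Serdyukov, and your argument is essentially the standard one from those references --- delete one edge from each of two cycles, observe that the two possible reconnections together add at least $2\max\bigl(d(v,v'),d(u,u')\bigr)$ by the triangle inequality so the better one loses at most $\min\bigl(d(v,v'),d(u,u')\bigr)$, and always merge along an edge of weight at most $W/n$ (your detour through the cycle minimizing $W(c^*)/|V(c^*)|$ is equivalent to simply taking the lightest of the $n$ edges of the current cover, since every cover of the $n$ vertices by cycles has exactly $n$ edges). The compounding of the $(1-1/n)$ factors over $k-1$ merges and the $O(kn)$ running-time accounting are also right.
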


\section{Algorithms}
Lemma~\ref{lem2} and Fact~\ref{fact1} prompt an idea how to patch the cycles of an optimal cycle cover into one with a small weight loss: While it is possible, we will perform $\delta$-gluings of these cycles and then combine the remaining, rather small, number of cycles by using the Serdyukov's result.
It can be formalized as follows:\medskip

\noindent\textbf{Algorithm~${\cal A}$.}\smallskip

\noindent\emph{Input}: a set $V$ of $n$ points in ${\cal M}$; the distances $dist(a,b)$ for all $a,b\in V$; a parameter $\delta\in(0,1)$.
\emph{Output}: a Hamiltonian cycle $H$ in the graph $G[V]$.\medskip

\noindent\emph{Step~{\rm 1}}:
By using the $O(n^3)$-time algorithm from \cite{Gabow}, find a maximum-weight cycle cover $C_0$ of the graph $G[V]$;
construct a set $E_0$ which includes exactly two minimum-weight edges of each cycle in $C_0$.\medskip

\noindent\emph{Step~{\rm 2}}:
Set $C=C_0$ and, while it is possible, repeat the following operations.
Denote by $c_1,\dots,c_k$ the cycles in $C$ and, for each $i=1,\dots,k$, select any edge $\{a_i,b_i\}\in E_0$ in the cycle~$c_i$.
Find any pair of cycles $c_p,c_q$, $p,q\in\{1,\dots,k\}$, which admit a $\delta$-gluing on the edges $\{a_p,b_p\}$, $\{a_q,b_q\}$ and update $C$ by performing this $\delta$-gluing.\medskip

\noindent\emph{Step~{\rm 3}}:
Apply the algorithm from Fact~\ref{fact1} to the cycle cover $C$ and return the resulting Hamiltonian cycle $H$.

\begin{theorem}\label{th1}
If the space $({\cal M},dist)$ is of doubling dimension at most $dim$, then Algorithm~${\cal A}$ finds a Hamiltonian cycle of total weight at least $1-(2/3)\delta-(2/\delta)^{2dim}/(2n)$ of that of an optimal cycle cover in time $O(n^3)$.
\end{theorem}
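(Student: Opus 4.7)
The plan is to compare $w(H)$ to $w(C_0)$ by tracking the two sources of loss separately: the $\delta$-gluings performed in Step~2 and the single Serdyukov merge of Step~3. The cornerstone of the analysis is the invariant that throughout Step~2 each current cycle contains exactly two edges of $E_0$. This holds initially by the construction of $E_0$ and is preserved by every gluing, since each gluing destroys one $E_0$-edge per cycle and the merged cycle inherits $(2-1)+(2-1)=2$ such edges. In particular, the edge $\{a_i,b_i\}\in E_0$ required by the algorithm's selection rule always exists, and once an $E_0$-edge is used as a glued edge it disappears from the cover forever.

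Next I would bound the weight loss of Step~2. A single $\delta$-gluing on the edges $\{a_p,b_p\},\{a_q,b_q\}$ decreases the weight of the cover by at most $\delta\bigl(dist(a_p,b_p)+dist(a_q,b_q)\bigr)$ by definition. Because each edge of $E_0$ can be glued at most once, summing over all gluings yields a total loss of at most $\delta\cdot w(E_0)$. Every cycle of the undirected cover $C_0$ has at least three edges, so its two lightest edges contribute at most $2/3$ of its weight, giving $w(E_0)\le (2/3)\,w(C_0)$; hence the cover $C$ at the end of Step~2 has weight at least $\bigl(1-(2/3)\delta\bigr)\,w(C_0)$. When Step~2 terminates no $\delta$-gluing is possible on the current choice of edges, so Lemma~\ref{lem2} forces $k\le (2/\delta)^{2dim}/2$, and Fact~\ref{fact1} then produces a Hamiltonian cycle of weight at least $(1-1/n)^{k-1}\,w(C)\ge \bigl(1-k/n\bigr)\,w(C)\ge \bigl(1-(2/\delta)^{2dim}/(2n)\bigr)\,w(C)$. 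Combining the two factors via $(1-a)(1-b)\ge 1-a-b$ for $a,b\ge 0$ delivers the announced ratio.

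The running time is routine: Step~1 runs in $O(n^3)$ by the cited algorithm of Gabow; Step~2 performs at most $|C_0|-1=O(n)$ gluings, and each iteration scans the $O(k^2)=O(n^2)$ cycle pairs in $O(1)$ time per pair; Step~3 runs in $O(kn)=O(n^2)$ time. Altogether this is $O(n^3)$. The only point that deserves real care, and that I view as the main obstacle, is the invariant that every cycle produced during Step~2 retains exactly two $E_0$-edges: without it, the selection rule in Step~2 could fail for lack of a valid edge, and the clean accounting $w(E_0)\le(2/3)\,w(C_0)$ would not translate into a usable bound on the cumulative gluing loss. Once this invariant is in place, the rest of the proof is an accounting exercise glued together from Lemmas~\ref{lem1} and~\ref{lem2} and Fact~\ref{fact1}.
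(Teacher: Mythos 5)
Your proposal is correct and follows essentially the same route as the paper's own proof: the invariant that every current cycle keeps two $E_0$-edges, the bound $w(E_0)\le(2/3)w(C_0)$, the loss accounting $\delta\,w(E_0)$ over all gluings, Lemma~\ref{lem2} to bound $k$ at termination, Fact~\ref{fact1} for the final merge, and the same $O(n^3)$ time analysis. The only detail you assert without justification is that a glued $E_0$-edge ``disappears forever'' (equivalently, that the replacement edges are never in $E_0$); the paper closes this in one line by noting that the replacement edges join endpoints lying in different cycles of $C_0$, hence cannot be edges of $C_0$ and so are not in $E_0$.
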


\begin{proof}
It is easy to prove by induction that, at each iteration of Step~2, every cycle in the cycle cover $C$ contains at least two edges from the set $E_0$.
So we always may select a required edge $\{a_i,b_i\}$ in every cycle $c_i$ in $C$.
Then, according to Lemma~\ref{lem2}, the number of cycles in $C$ is reduced to $k\le(2/\delta)^{2dim}/2$ by the end of Step~2.

Next, at each $\delta$-gluing, we replace some edges $e_1,e_2\in E_0$ in the cycle cover $C$ by some edges $e'_1,e'_2$ which connect the endpoints of $e_1$ with those of $e_2$.
Since the edges $e_1,e_2$ belong to different cycles in $C_0$, then the edges $e'_1,e'_2$ can not belong to $C_0$, so $e'_1,e'_2\notin E_0$.
Therefore, after any edge is removed from the cycle cover $C$, it is no longer included in $C$ on further $\delta$-gluings.
At the same time, the total weight of $e'_1,e'_2$ is at least $1-\delta$ of that of $e_1,e_2$.
It follows that, during Step~2, the total weight of $C$ is decreased at most by $\delta$ of the total weight of $E_0$.
But, by the construction of the set $E_0$, its total weight is at most $2/3$ of that of $C_0$.
Hence, by the end of Step~2, the total weight of $C$ is at least $1-(2/3)\delta$ of that of~$C_0$.

Finally, by Fact~\ref{fact1}, the Hamiltonian cycle $H$ we obtain at Step~3 is of total weight at least $(1-1/n)^{k-1}>1-k/n$ of that of~$C$.
So the total weight of $H$ is at least
$$\big(1-(2/3)\delta\big)(1-k/n)>1-(2/3)\delta-(2/\delta)^{2dim}/(2n)$$
of that of $C_0$.

It remains to estimate the time complexity of Algorithm~${\cal A}$.
At Step~1, we construct an optimal cycle cover by using the algorithm from \cite{Gabow} in time~$O(n^3)$.
Each iteration of Step~2 can be performed in time $O(n^2)$, while the number of these iterations is~$O(n)$.
Step~3 takes time $O(kn)=O(n^2)$ by Fact~\ref{fact1}.
Thus, the running time of Algorithm~${\cal A}$ is~$O(n^3)$.
The theorem is proved.
\hfill$\Box$
\end{proof}

As a corollary, we obtain an efficient polynomial-time approximation scheme for Max~TSP in a metric space of fixed doubling dimension:

\begin{theorem}\label{th2}
Max~TSP in a metric space of doubling dimension at most $dim$ admits an approximation scheme which, for any $\varepsilon\in(0,1)$, finds a $(1-\varepsilon)$-approximate solution of the problem in time $O\big(2^{(2/\varepsilon)^{2dim+1}}+n^3\big)$.
\end{theorem}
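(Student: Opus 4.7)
The plan is to reduce Theorem~\ref{th2} to Theorem~\ref{th1} by a two-case argument: apply Algorithm~${\cal A}$ with an appropriately chosen parameter $\delta$ when $n$ is sufficiently large, and otherwise solve Max~TSP exactly by a constant-size subroutine.

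First, I would set $\delta$ to be a small constant multiple of $\varepsilon$, say $\delta = (3/4)\varepsilon$, so that the first error term in Theorem~\ref{th1} satisfies $(2/3)\delta \le \varepsilon/2$. Then I would determine the threshold $n_0$ on the number of vertices that forces the second error term $(2/\delta)^{2dim}/(2n)$ to be at most $\varepsilon/2$: a direct calculation shows $n_0$ can be taken of order $(2/\varepsilon)^{2dim+1}$, so whenever $n \ge n_0$, Algorithm~${\cal A}$ already returns a Hamiltonian cycle of weight at least $1-\varepsilon$ times that of an optimal cycle cover, and in particular at least $1-\varepsilon$ times that of an optimal Hamiltonian cycle. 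This case contributes the $O(n^3)$ term in the running time by Theorem~\ref{th1}.

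For the complementary case $n < n_0$, I would invoke a classical exact algorithm for TSP on a complete weighted graph, namely the Bellman--Held--Karp dynamic program, which solves Max~TSP in time $O(2^n n^2)$. Substituting $n < n_0 = O((2/\varepsilon)^{2dim+1})$ bounds this running time by $2^{O((2/\varepsilon)^{2dim+1})}$, which matches the additive constant term in the claimed complexity (with the implicit constant absorbed into the enclosing $O(\cdot)$). Combining the two cases and taking the maximum of the two running times yields the desired bound $O\bigl(2^{(2/\varepsilon)^{2dim+1}} + n^3\bigr)$.

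Because the ingredients are already in place, I do not anticipate any serious obstacle; the only thing requiring care is bookkeeping the two error terms so that their sum is truly at most~$\varepsilon$, and verifying that the threshold $n_0$ is small enough that an exponential-time exact algorithm on $n<n_0$ vertices costs only $2^{(2/\varepsilon)^{2dim+1}}$ up to constants in the exponent. With $\delta$ chosen proportional to $\varepsilon$, both conditions are satisfied simultaneously, completing the reduction from Theorem~\ref{th1} to the EPTAS statement.
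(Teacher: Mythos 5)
Your overall strategy is the same as the paper's: run Algorithm~${\cal A}$ with $\delta$ proportional to $\varepsilon$ once $n$ exceeds a threshold, and solve smaller instances exactly with the Bellman--Held--Karp dynamic program (the paper additionally dispatches $\varepsilon\ge 1/6$ to the Kostochka--Serdyukov $5/6$-approximation, but that is only to keep $\delta<1$ for its choice of $\delta$). The genuine flaw is in the final bookkeeping. With $\delta=(3/4)\varepsilon$, the second error term of Theorem~\ref{th1} is $\big((8/3)/\varepsilon\big)^{2dim}/(2n)$, so forcing it below $\varepsilon/2$ requires $n\ge n_0=\big((8/3)/\varepsilon\big)^{2dim}/\varepsilon=\frac{(4/3)^{2dim}}{2}\,(2/\varepsilon)^{2dim+1}$. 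The factor $(4/3)^{2dim}/2$ exceeds $1$ already for $dim\ge 2$, and it sits \emph{inside the exponent} of the exact algorithm's cost: on instances with $n<n_0$ you pay up to $2^{n_0}n_0^2=2^{\frac{(4/3)^{2dim}}{2}(2/\varepsilon)^{2dim+1}}n_0^2$, which is \emph{not} $O\big(2^{(2/\varepsilon)^{2dim+1}}\big)$. A constant $C>1$ multiplying the exponent cannot be ``absorbed into the enclosing $O(\cdot)$'': the ratio $2^{Cx}/2^{x}=2^{(C-1)x}$ diverges as $\varepsilon\to 0$. So your argument does establish that an EPTAS exists (time $f(\varepsilon)+O(n^3)$ with $f(\varepsilon)=2^{\mathrm{poly}(1/\varepsilon)}$), but not the specific running-time bound claimed in the theorem.

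The repair is to choose $\delta=c\varepsilon$ with $c$ slightly \emph{larger} than $1$, as the paper does with $c=12/11$: the first error term $(2/3)c\varepsilon$ is still strictly below $\varepsilon$, leaving a budget of $(1-2c/3)\varepsilon$ for the second term, and now $2/\delta=(2/c)/\varepsilon$ has base $2/c=11/6$ strictly less than $2$. The threshold then becomes $n(\varepsilon)=\big((11/6)/\varepsilon\big)^{2dim+1}<(2/\varepsilon)^{2dim+1}$, and the strict gap between the bases $11/6$ and $2$ is exactly what absorbs both the $dim$-dependent constants and the polynomial factor $n(\varepsilon)^2$ of the dynamic program, yielding $O\big(2^{(2/\varepsilon)^{2dim+1}}+n^3\big)$ as stated. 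Note that with $c>1$ you must also ensure $\delta<1$, which is why the paper treats $\varepsilon\ge 1/6$ separately by a constant-factor algorithm; your choice $c=3/4$ avoided that case split but at the cost of the base $8/3>2$ that breaks the bound.
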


\begin{proof}
If $\varepsilon\ge 1/6$, then we get a $(1-\varepsilon)$-approximate solution of Max~TSP by using the $O(n^3)$-time $5/6$-approximation algorithm of Kostochka and Serdyukov~\cite{Serdyukov1985,BGS}.
Suppose that $\varepsilon<1/6$.
In this case, we set $\delta=(12/11)\varepsilon$ and, since $\delta<1$, obtain that the approximation ratio of Algorithm~${\cal A}$ is at least $1-(8/11)\varepsilon-((11/6)/\varepsilon)^{2dim}/(2n)$.

If $n$ is greater than $n(\varepsilon)=((11/6)/\varepsilon)^{2dim+1}$, then the term $((11/6)/\varepsilon)^{2dim}/(2n)$ is less than $(3/11)\varepsilon$, so Algorithm~${\cal A}$ outputs a Hamiltonian cycle with ap\-p\-ro\-xi\-ma\-ti\-on factor at least $1-\varepsilon$.
If $n\le n(\varepsilon)$, then we compute an optimal solution of Max~TSP by using the exact $O(2^nn^2)$-time dynamic-programming algorithm for the usual TSP from \cite{Bellman,Held}.
To reduce Max~TSP to TSP, we replace the weight $w(e)$ of every edge $e$ by the value $w-w(e)$, where $w$ is the maximum edge weight.

Thus, in any case, we get a $(1-\varepsilon)$-approximate solution of Max~TSP in time
$$O\big(\max\big\{2^{((11/6)/\varepsilon)^{2dim+1}}((11/6)/\varepsilon)^{4dim+2},\ n^3\big\}\big)=O\big(2^{(2/\varepsilon)^{2dim+1}}+n^3\big).$$
The theorem is proved.
\hfill$\Box$
\end{proof}

Another corollary of Theorem~\ref{th1} is a simple cubic-time approximation algorithm which computes asymptotically optimal solutions of Max~TSP in fixed and ``slowly growing'' doubling dimensions:

\begin{theorem}\label{th3}
Max~TSP in a metric space of doubling dimension at most $dim$ admits an $O(n^3)$-time approximation algorithm with relative error at most $(11/6)/n^{\frac{1}{2dim+1}}$.
\end{theorem}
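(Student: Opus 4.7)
The plan is to apply Algorithm~${\cal A}$ (or a small-$n$ fallback) with a specific value of $\delta$, then extract the stated relative error from Theorem~\ref{th1} by direct substitution. The natural candidate is $\delta = 2/n^{1/(2dim+1)}$, chosen so that the two terms $(2/3)\delta$ and $(2/\delta)^{2dim}/(2n)$ in the error expression of Theorem~\ref{th1} both scale as $1/n^{1/(2dim+1)}$ and add up cleanly.

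First I would handle the main case $n > 2^{2dim+1}$, where $\delta \in (0,1)$ and Algorithm~${\cal A}$ is applicable. By Theorem~\ref{th1}, in $O(n^3)$ time it returns a Hamiltonian cycle of weight at least $1 - (2/3)\delta - (2/\delta)^{2dim}/(2n)$ times the weight of an optimum cycle cover, and a fortiori times the weight of an optimum Hamiltonian cycle. A short algebraic check gives $(2/3)\delta = (4/3)\,n^{-1/(2dim+1)}$ and $(2/\delta)^{2dim}/(2n) = n^{2dim/(2dim+1)-1}/2 = (1/2)\,n^{-1/(2dim+1)}$; their sum is exactly $(11/6)\,n^{-1/(2dim+1)}$, which is the claimed bound on the relative error.

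Second, for the complementary small-$n$ regime $n \le 2^{2dim+1}$, the chosen $\delta$ would leave the admissible range $(0,1)$. Here, however, $n^{1/(2dim+1)} \le 2$, so the target error $(11/6)/n^{1/(2dim+1)} \ge 11/12 > 1/6$ is already attained by the $O(n^3)$-time $5/6$-approximation algorithm of Kostochka and Serdyukov~\cite{Serdyukov1985,BGS}. Invoking that algorithm in this regime fulfils both the error bound and the $O(n^3)$ time bound, and completes the description.

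The conceptual core of the proof is the optimization of $\delta$ that produces the clean constant $11/6$, after which everything reduces to the substitution above. I expect the only real nuisance to be the small-$n$ boundary where the formula for $\delta$ drops out of range; this is handled cleanly by the constant-factor fallback, so no genuine obstacle arises.
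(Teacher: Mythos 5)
Your proposal is correct and follows essentially the same route as the paper: the same choice $\delta=2/n^{1/(2dim+1)}$ plugged into Theorem~\ref{th1} for $n>2^{2dim+1}$, the same arithmetic yielding $(4/3+1/2)/n^{1/(2dim+1)}=(11/6)/n^{1/(2dim+1)}$, and the same Kostochka--Serdyukov fallback for small $n$ justified by $11/12>1/6$.
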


\begin{proof}
By Theorem~\ref{th1}, the relative error of Algorithm~${\cal A}$ is bounded by the value of $err(\delta)=(2/3)\delta+(2/\delta)^q/(2n)$, where $q=2dim$.
If $n>2^{q+1}$, then we apply this algorithm with the parameter $\delta=2/n^{\frac{1}{q+1}}$.
In this case, we have $\delta<1$ and
$$err(\delta)=(4/3)/n^{\frac{1}{q+1}}+n^{\frac{q}{q+1}}/(2n)=(4/3+1/2)/n^{\frac{1}{q+1}}=(11/6)/n^{\frac{1}{q+1}}.$$

If $n\le 2^{q+1}$, then we use the $O(n^3)$-time $5/6$-approximation algorithm of Kostochka and Serdyukov \cite{Serdyukov1985,BGS}.
In this case, we also obtain a solution with relative error at most $(11/6)/n^{\frac{1}{q+1}}$ since $(11/6)/(2^{q+1})^{\frac{1}{q+1}}=11/12>1/6$.
%
The theorem is proved.
\hfill$\Box$
\end{proof}

In the case when $dim=o(\ln n)$, the relative error of the algorithm described in Theorem~\ref{th3} is at most $(11/6)/n^{\frac{1}{o(\ln n)}}=(11/6)/e^{\frac{\ln n}{o(\ln n)}}\rightarrow 0$ as $n\rightarrow\infty$.
So we have a polynomial-time asymptotically exact algorithm for the metric Max~TSP in fixed and sublogarithmic doubling dimensions.


\section{Conclusion}
We propose an efficient polynomial-time approximation scheme (EPTAS) for the ma\-xi\-mum traveling salesman problem in a metric space of fixed doubling dimension.
Additionally, we describe a cubic-time asymptotically exact algorithm for this problem in fixed and sublogarithmic doubling dimensions.

A natural direction for future work is constructing an approximation scheme which is ``efficient'' not only in the sense of the definition of an EPTAS but also in the practical sense, i.e., which finds close-to-optimal solutions of the problem in reasonable time.
However, this may not be easy since, unless P$=$NP, even the $3$-dimensional Euclidean Max~TSP does not admit a fully polynomial-time approximation scheme.

An open question is the existence of an approximation scheme for sublogarithmic doubling dimensions.
This case is interesting in that any $n$-point metric space is of doubling dimension at most $\log_2n$, while Max~TSP is APX-hard in the general metric setting.
In this regard, the dimension $dim=o(\ln n)$ may be a boundary case for instances of Max~TSP which admit a near-optimal approximation.




\begin{thebibliography}{}

\bibitem{Arora}
Arora S.:
Polynomial time approximation schemes for Euclidean traveling salesman and other geometric problems.
J.~ACM \textbf{45}(5), 753--782 (1998)

\bibitem{Bartal}
Bartal Y., Gottlieb L-A., Krauthgamer R.:
The traveling salesman problem: Low-dimensionality implies a polynomial time approximation scheme.
SIAM J. Comput. \textbf{45}(4), 1563--1581 (2016)

\bibitem{Barvinok}
Barvinok A., Fekete S.P., Johnson D.S., Tamir A., Woeginger~G.J., Woodroofe~R.:
The geometric maximum traveling salesman problem.
J.~ACM \textbf{50}(5), 641--664 (2003)
    
\bibitem{BGS}
Barvinok A.I., Gimadi E.Kh., Serdyukov A.I.:
The maximum traveling salesman problem.
In: The traveling salesman problem and its applications, pp.~585--608. Kluwer Academic Publishers, Dordrecht (2002)

\bibitem{Bellman}
Bellman R.:
Dynamic programming treatment of the travelling salesman problem.
J.~ACM \textbf{9}(1), 61--63 (1962)


\bibitem{EK}
Engebretsen L., Karpinski M.:
TSP with bounded metrics.
J.~Comp. System Sci. \textbf{72}(4), 509--546 (2006)
    
\bibitem{Fekete}
Fekete S.P.:
Simplicity and hardness of the maximum traveling salesman problem under geometric distances.
In: Proc. 10th ACM-SIAM Symposium on Discrete Algorithms (SODA~1999), 337--345 (2015)


\bibitem{Gabow}
Gabow H.:
An efficient reduction technique for degree-constrained subgraph and bidirected network flow problems.
In: Proc. 15th ACM Symposium on Theory of Computing (STOC~1983), 448--456 (1983)

\bibitem{Held}
Held M., Karp R.M.:
A dynamic programming approach to sequencing problems.
J.~Soc. Indust. Appl. Math. \textbf{10}(1), 196--210 (1962)

\bibitem{Kaplan}
Kaplan H., Lewenstein M., Shafrir N., Sviridenko M.:
Approximation algorithms for asymmetric TSP by decomposing directed regular multigraphs.
J.~ACM \textbf{52}(4), 602--626 (2005)

\bibitem{Serdyukov1985}
Kostochka A.V., Serdyukov A.I.:
Polynomial algorithms with the estimates $3/4$ and $5/6$ for the traveling salesman problem of the maximum (in Russian).
Upravlyaemye Sistemy \textbf{26}, 55--59 (1985)



\bibitem{Kowalik78}
Kowalik L., Mucha M.:
Deterministic $7/8$-approximation for the metric maximum TSP.
Theor. Comp. Sci. \textbf{410}(47--49), 5000--5009 (2009)

\bibitem{Kowalik3544}
Kowalik L., Mucha M.:
$35/44$-approximation for asymmetric maximum TSP with triangle ine\-qua\-li\-ty.
Algorithmica \textbf{59}(2), 240--255 (2011)

\bibitem{Mitchell}
Mitchell J.S.B.:
Guillotine subdivisions approximate polygonal subdivisions: A simple polynomial-time approximation scheme for geometric TSP, $k$-MST, and related problems.
SIAM J. Comput. \textbf{28}(4), 1298--1309 (1999)

\bibitem{Paluch79}
Paluch K., Mucha M., M\c{a}dry A.:
A $7/9$-approximation algorithm for the maximum tra\-ve\-ling salesman problem.
In: Proc. 12th Workshop on Approximation Algorithms for Combinatorial Optimization (APPROX~2009), Lecture Notes in Computer Science, vol.~5687, 298--311 (2009)

\bibitem{Papadimitriou1977}
Papadimitriou C.H.:
The Euclidean traveling salesman problem is NP-complete.
Theor. Comp. Sci. \textbf{4}(3), 237--244 (1977)

\bibitem{Papadimitriou1993}
Papadimitriou C.H., Yannakakis M.:
The traveling salesman problem with distances one and two.
Math. Oper. Res. \textbf{18}(1), 1--11 (1993)

\bibitem{Sahni}
Sahni S., Gonzalez T.:
P-complete approximation problems.
J.~ACM \textbf{23}(3), 555--565 (1976)



\bibitem{Serdyukov1987}
Serdyukov A.I.:
An asymptotically exact algorithm for the traveling salesman problem for a maximum in Euclidean space (in Russian).
Upravlyaemye sistemy \textbf{27}, 79--87 (1987)

\bibitem{Serdyukov1991}
Serdyukov A.I.:
Polynomial time algorithm with estimates of accuracy of solutions for one class of the maximum cost TSP (in Russian).
In: Kombinatorno-Algebraicheskie Metody v Diskretnoi Optimizatsii, pp.~107--114. Nizhny Novgorod Univ., Nizhny Novgorod (1991)


\bibitem{Serdyukov1997}
Serdyukov A.I.:
The maximum-weight traveling salesman problem in finite-dimensional real spaces.
In: Operations Research and Discrete Analysis, Mathematics and Its Applications, vol.~391, pp.~233--239. Kluwer Academic Publishers, Dordrecht (1997)

\bibitem{Shen2010}
Shenmaier V.V.:
An asymptotically exact algorithm for the maximum traveling salesman problem in a finite-dimensional normed space.
J.~Appl. Industr. Math. \textbf{5}(2), 296--300 (2011)

\bibitem{Shen2014}
Shenmaier V.V.:
Asymptotically optimal algorithms for geometric Max TSP and Max $m$-PSP.
Discrete Appl. Math. \textbf{163}(2), 214--219 (2014)

\end{thebibliography}
\end{document}